\DeclarePairedDelimiter\bra{\langle}{\rvert}
\DeclarePairedDelimiter\ket{\lvert}{\rangle}
\DeclarePairedDelimiter\kket{\lvert}{\rangle\!\rangle}
\DeclarePairedDelimiterX\braket[2]{\langle}{\rangle}{#1 \delimsize\vert #2}
\DeclarePairedDelimiterX\bbraket[2]{\langle\!\langle}{\rangle\!\rangle}{#1 \delimsize\| #2}
\DeclarePairedDelimiterX\cbraket[2]{(\!(}{)\!)}{#1 \delimsize\| #2}
\DeclarePairedDelimiterX\ketbra[2]{\lvert}{\rvert}{#1 \delimsize\rangle\!\langle #2}
\DeclarePairedDelimiterX\kketbra[2]{\|}{\|}{#1 \delimsize\rangle\!\rangle\!\langle\!\langle #2}
\DeclarePairedDelimiterX\cketbra[2]{\|}{\|}{#1 \delimsize)\!)\!(\!( #2}
\DeclarePairedDelimiterX\inner[2]{\langle}{\rangle}{#1,#2}
\def\tr{{\rm tr}\,}
\newcommand{\arxiv}[2][]{\ifthenelse{\isempty{#1}}{\href{http://arxiv.org/abs/#2}{{\tt arXiv:\allowbreak{}#2}}} {\href{http://arxiv.org/abs/#2}{{\tt arXiv:\allowbreak{}#2 [#1]}}}}
\newcommand{\booktitle}{\textsl}
\newcommand{\hrefdoi}[2]{\href{https://dx.doi.org/#1}{#2}}
\newtheorem{theorem}{Theorem}
\newtheorem{lemma}{Lemma}
\newtheorem{conjecture}{Conjecture}
\newtheorem{prop}{Proposition}
\newtheorem{remark}{\normalfont\textit{Remark}}
\begin{document}
\title{L\"uders Channels and the Existence of Symmetric Informationally Complete Measurements}
\author{John B.\ DeBrota}
\author{Blake C.\ Stacey}
\affiliation{University of Massachusetts Boston, Morrissey Boulevard, Boston MA 02125, USA}
\date{\today}
\begin{abstract}
  The L\"uders rule provides a way to define a quantum channel given a
  quantum measurement. Using this construction, we establish an
  if-and-only-if condition for the existence of a $d$-dimensional
  Symmetric Informationally Complete quantum measurement (a SIC) in
  terms of a particular depolarizing channel. Moreover, the channel in
  question satisfies two entropic optimality criteria.
\end{abstract}
\maketitle

\section{Introduction}

A \textit{minimal informationally complete} (MIC) quantum measurement
for a $d$ dimensional Hilbert space $\mathcal{H}_d$ is a set of
linearly independent positive semidefinite operators $\{E_i\}$,
$i=1,\ldots,d^2$, which sum to the identity~\cite{DeBrota:2018a,
DeBrota:2018b}. If every element in a MIC is proportional to a rank-$n$ projector, we say the MIC itself is \textit{rank-n}. If the Hilbert--Schmidt inner products $\tr E_iE_j$ equal one constant for all $i\neq j$ and another constant when $i=j$, we say the MIC is \textit{equiangular}. A \textit{symmetric informationally complete} (SIC)
quantum measurement is a rank-1 equiangular MIC~\cite{Zauner:1999,
Renes:2004, Scott:2010, Fuchs:2017}. When a SIC $\{H_i\}$ exists, one can show
that $H_i=\frac{1}{d}\Pi_i$ where $\Pi_i$ are rank-1 projectors and
that
\begin{equation}
    \tr H_iH_j=\frac{d\delta_{ij}+1}{d^2(d+1)}\;.
\end{equation}

The theory of quantum channels provides a means to discuss the fully general way in which quantum states may be transformed. A standard result \cite{Nielsen:2010} has it that a quantum channel $\mathcal{E}$ may always be specified by a set of operators $\{A_i\}$, called \textit{Kraus operators}, such that for a quantum state $\rho$,
\begin{equation}
    \mathcal{E}(\rho)=\sum_iA_i\rho A_i^\dag\;.
\end{equation}

Consider a physicist Alice who is preparing to send a quantum system
through a channel that she models by $\mathcal{E}$. Alice initially
describes her quantum system by assigning to it a density matrix
$\rho$. The state $\mathcal{E}(\rho)$ encodes Alice's expectations for
measurements that can potentially be performed after the system is
sent through the channel. More specifically, let Alice's channel be a
\emph{L\"uders MIC channel} (LMC) associated with the MIC $\{E_i\}$, which may be understood in the following way. Alice plans to apply the
MIC $\{E_i\}$, and upon obtaining the result of that measurement, invoke the L\"uders rule~\cite{Barnum:2000, Busch:2009} to obtain a new state for her system,
\begin{equation}
    \rho'_i := \frac{1}{\tr \rho E_i} \sqrt{E_i} \rho \sqrt{E_i}\;,
\end{equation}
where we have introduced the \emph{principal Kraus operators}
$\{\sqrt{E_i}\}$, the unique positive semidefinite square roots of the MIC elements. Before applying her MIC, Alice can write the
\emph{post-channel state}
\begin{equation}
  \mathcal{E}(\rho) := \sum_i p(E_i) \rho'_i
  = \sum_i \sqrt{E_i} \rho \sqrt{E_i}\;,
\end{equation}
which is a weighted average of the states from which Alice plans to
select the actual state she will ascribe to the system after making
the measurement. (For more on the broader conceptual context of this
operation, see~\cite{Fuchs:2012, Stacey:2019}.)

LMCs are a proper subset of all quantum channels as many valid channels are unrelated to a MIC and do not admit a representation in terms of principal Kraus operators. For example, a unitary channel is not an LMC.

Throughout this paper, we will frequently use the fact that any MIC element $E_i$ is proportional to a density matrix $E_i:=e_i\rho_i$, where we call the proportionality constants $\{e_i\}$ the \textit{weights} of the MIC. Because the $\{E_i\}$ sum to the identity, the weights $\{e_i\}$ sum to the trace of the identity, which is just the dimension $d$. 

We refer to the LMC obtained from a SIC as the \textit{SIC channel}
$\mathcal{E}_{\rm SIC}$.  We may characterize the SIC channel in any
dimension in which a SIC exists using the convenient notion of a \textit{dual basis}. Given a basis for a vector space, any vector in that space is uniquely identified by its inner products with the basis elements. These inner products are the coefficients in the expansion of the vector over the elements of the dual basis. Likewise, the inner products with the elements of the dual basis are the coefficients in the expansion over the original basis. A consequence of this is that, if $\{H_i\}$ denotes the original basis and $\{\widetilde{H}_j\}$ denotes its dual basis, then
\begin{equation}
    \tr H_i \widetilde{H}_j = \delta_{ij}.
\end{equation}
It follows that if $\{H_i\}$ is a SIC, then the dual basis is given by
\begin{equation}
\widetilde{H}_j=(d+1)\Pi_j-I\;,
\end{equation}
so we may write an
operator $X\in\mathcal{L}(\mathcal{H}_d)$ in the SIC basis as
\begin{equation}
    \begin{split}
  X &= \sum_j(\tr X\widetilde{H}_j)H_j\\
    &= (d+1)\sum_j(\tr X\Pi_j)H_j - (\tr X)I\;.
\end{split}
\end{equation}
Noting that $(\tr X\Pi_j)H_j=\frac{1}{d}\Pi_jX\Pi_j$ and that
$\sqrt{H_j}=\frac{1}{\sqrt{d}}\Pi_j$, we obtain
\begin{equation}\label{sicchannel}
  \mathcal{E}_{\rm SIC}(X) = \frac{1}{d}\sum_j\Pi_jX\Pi_j
  = \frac{(\tr X)I+X}{d+1}\;,
\end{equation}
and so for any quantum state $\rho$,
\begin{equation}
  \mathcal{E}_{\rm SIC}(\rho) = \frac{I+\rho}{d+1}\;.
\end{equation}

Going forward, given an LMC $\mathcal{E}$ and input state $\rho$, let $\lambda$ denote the eigenvalue spectrum of the post-channel state $\mathcal{E}(\rho)$ and $\lambda_{\rm max}$ denote the maximum eigenvalue of this state. We use the notation $\overline{f(\ketbra{\psi}{\psi})}$ to denote the average value of the function $f(\ketbra{\psi}{\psi})$ over all pure state inputs $\ketbra{\psi}{\psi}$ with respect to the Haar measure. We now prove a lemma applicable to arbitrary LMCs upon which our
later results rely. 
\begin{lemma}\label{lemma}
    Let $\mathcal{E}$ be an LMC. Then
    \begin{equation}
        \overline{\lambda}_{\rm max}\geq \frac{2}{d+1}\;.
    \end{equation}
\end{lemma}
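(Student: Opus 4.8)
The plan is to bound the random variable $\lambda_{\max}$ pointwise from below by a quantity I can Haar-average in closed form, and then reduce the result to an elementary scalar inequality. The key decision is to choose a lower bound on $\lambda_{\max}(\mathcal{E}(\ketbra{\psi}{\psi}))$ that is \emph{saturated by the SIC channel}: we already know $\mathcal{E}_{\rm SIC}$ sends every pure state to $\frac{I+\ketbra{\psi}{\psi}}{d+1}$, whose largest eigenvalue is exactly $\frac{2}{d+1}$ with eigenvector $\ket{\psi}$. This suggests using $\ket{\psi}$ itself as the trial vector in the variational characterization of $\lambda_{\max}$. First I would write, for each fixed $\ket{\psi}$,
\begin{equation}
  \lambda_{\max}\big(\mathcal{E}(\ketbra{\psi}{\psi})\big) \geq \bra{\psi}\mathcal{E}(\ketbra{\psi}{\psi})\ket{\psi} = \sum_i \big(\bra{\psi}\sqrt{E_i}\ket{\psi}\big)^2\;,
\end{equation}
where the equality uses the principal Kraus form of the LMC.

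Next I would take the Haar average of the right-hand side using the standard second-moment identity $\overline{\ketbra{\psi}{\psi}^{\otimes 2}} = \frac{I+S}{d(d+1)}$, with $S$ the swap operator. Applying it to $A\otimes A$ with $A=\sqrt{E_i}$ gives $\overline{(\bra{\psi}\sqrt{E_i}\ket{\psi})^2} = \frac{(\tr\sqrt{E_i})^2 + \tr E_i}{d(d+1)}$, since the identity term contributes $(\tr\sqrt{E_i})^2$ and the swap term contributes $\tr(\sqrt{E_i}\sqrt{E_i})=\tr E_i$. Summing over $i$ and using $\sum_i \tr E_i = \tr I = d$ yields
\begin{equation}
  \overline{\lambda}_{\max} \geq \frac{\sum_i (\tr\sqrt{E_i})^2 + d}{d(d+1)}\;.
\end{equation}

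It then remains to show $\sum_i (\tr\sqrt{E_i})^2 \geq d$. I would establish the termwise bound $(\tr\sqrt{E_i})^2 \geq \tr E_i$: writing the eigenvalues of $E_i$ as $x_{i,k}\geq 0$, we have $(\tr\sqrt{E_i})^2 = (\sum_k\sqrt{x_{i,k}})^2 \geq \sum_k x_{i,k} = \tr E_i$ because the omitted cross terms are non-negative. Summing and invoking $\sum_i \tr E_i = d$ once more gives $\sum_i(\tr\sqrt{E_i})^2 \geq d$, whence $\overline{\lambda}_{\max} \geq \frac{2d}{d(d+1)} = \frac{2}{d+1}$.

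The main obstacle is not any single calculation but the initial choice of bound, since cruder estimates lose too much. For instance, replacing each $\sqrt{E_i}$ by $E_i$ (valid because $E_i\le I$) or using $\lambda_{\max}\ge \tr \mathcal{E}(\rho)^2$ produces bounds that fall short of $\frac{2}{d+1}$ for large $d$. The argument succeeds precisely because both inequalities above are tight for the SIC channel: the Rayleigh-quotient step because $\ket{\psi}$ is the top eigenvector of $\mathcal{E}_{\rm SIC}(\ketbra{\psi}{\psi})$, and the termwise step because each SIC element is rank one, so $\tr\sqrt{E_i}=\sqrt{\tr E_i}$ with no cross terms. Recognizing that these two tightness conditions coincide, so that no slack is introduced, is the conceptual crux.
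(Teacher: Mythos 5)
Your proposal is correct and follows essentially the same route as the paper's proof: the Rayleigh-quotient bound with trial vector $\ket{\psi}$, the Haar second-moment identity (your $\frac{I+S}{d(d+1)}$ is the paper's $\frac{2P_{\rm sym}}{d(d+1)}$), and the termwise inequality $(\tr\sqrt{E_i})^2\geq \tr E_i$, which the paper phrases equivalently as $\tr\sqrt{\rho_i}\geq 1$ for $E_i=e_i\rho_i$.
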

\begin{proof}
    For an arbitrary pure state $\rho=\ketbra{\psi}{\psi}$,
  \begin{equation}
    \mathcal{E}(\ketbra{\psi}{\psi})
    = \sum_i\sqrt{E_i}\ketbra{\psi}{\psi}\sqrt{E_i}\;.
  \end{equation}
  We may lower bound $\lambda_{\rm max}$ given
  such an input as follows:
  \begin{equation}
    \lambda_{\rm max}
    \geq
    \bra{\psi}\mathcal{E}(\ketbra{\psi}{\psi})\ket{\psi}
    = \sum_i|\bra{\psi}\sqrt{E_i}\ket{\psi}|^2\;.
    \end{equation}
If we now average over all pure states with the Haar measure, we will 
  produce a generic lower bound on the average maximal eigenvalue of
  the post-channel state:
\begin{equation}\label{lowerbndavg}
      \overline{\lambda}_{\rm max}
      \geq\sum_i\int_{\mathcal{H}}|\bra{\psi}\sqrt{E_i}\ket{\psi}|^2
      d\Omega_{\psi}\;.
         \end{equation}
         We can evaluate this integral using a known property of the Haar measure \cite{Renes:2004}. Integrating a tensor power over pure state projectors gives a result proportional to the projector $P_{\rm sym}$ onto the symmetric subspace of $\mathcal{H}_d\otimes\mathcal{H}_d$. Consequently,
         \begin{equation}\label{sumsym}
             \overline{\lambda}_{\rm max}\geq \frac{2}{d(d+1)}\sum_i\tr\!\!\left[\left(\sqrt{E_i}
             \otimes \sqrt{E_i}\right)P_{\rm sym}\right].
         \end{equation}
         Using the fact that 
         \begin{equation}
             P_{\rm sym}=\frac{1}{2}\left(I\otimes I+\sum_{kl}\ketbra{l}{k}\otimes\ketbra{k}{l}\right)\;,
         \end{equation}
         one may verify that
         \begin{equation}
             \tr\!\!\left[\left(\sqrt{E_i}\otimes \sqrt{E_i}\right)P_{\rm sym}\right]=\frac{1}{2}\left(\tr\sqrt{E_i}\right)^2 +\frac{e_i}{2}\;.
         \end{equation}
         Thus equation \eqref{sumsym} becomes 
         \begin{equation}
             \overline{\lambda}_{\rm max}\geq \frac{1}{d(d+1)}\left(\sum_i\left(\tr\sqrt{E_i}\right)^2+\sum_ie_i\right)\;.
         \end{equation}
         Now since $\tr\sqrt{E_i}=\sqrt{e_i}\tr\sqrt{\rho_i}$ and $\tr\sqrt{\rho_i}\geq1$,  
         \begin{equation}
             \overline{\lambda}_{\rm max}\geq\frac{2}{d(d+1)}\sum_ie_i=\frac{2}{d+1}\;.
         \end{equation}

\end{proof}
The following theorem reveals that the SIC channel's action is unique to SICs among LMCs.
\begin{theorem}\label{existenceproof}
  A SIC exists in dimension $d$ iff there is an LMC with action
  $\mathcal{E}(\rho)=\frac{I+\rho}{d+1}$ for all $\rho$.
\end{theorem}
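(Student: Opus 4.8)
The forward direction needs no new computation: when a SIC $\{H_i\}$ exists, the SIC channel is by definition the LMC associated with the MIC $\{H_i\}$, and Eq.~\eqref{sicchannel} already displays its action as $\mathcal{E}_{\rm SIC}(\rho)=\frac{I+\rho}{d+1}$. The plan is therefore to establish the converse: starting from an LMC $\mathcal{E}$ with this action, arising from some MIC $\{E_i=e_i\rho_i\}$, I will show that $\{E_i\}$ must be a SIC.

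The crux, and the step I expect to be the main obstacle, is proving that each $E_i$ is rank one; here Lemma~\ref{lemma} does the work. For the channel in question every pure input gives $\mathcal{E}(\ketbra{\psi}{\psi})=\frac{I+\ketbra{\psi}{\psi}}{d+1}$, whose largest eigenvalue is exactly $\frac{2}{d+1}$, so $\overline{\lambda}_{\rm max}=\frac{2}{d+1}$ and the lemma's bound is saturated. Retracing the lemma's chain of inequalities,
\begin{equation}
\frac{2}{d+1}=\overline{\lambda}_{\rm max}\geq\frac{1}{d(d+1)}\Big(\sum_i(\tr\sqrt{E_i})^2+d\Big)\geq\frac{2}{d+1},
\end{equation}
pins every inequality to equality. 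In particular the estimate $\tr\sqrt{\rho_i}\geq1$ must be saturated for each $i$, which happens only when $\rho_i$ is pure. Hence $E_i=e_i\Pi_i$ with $\Pi_i$ a rank-1 projector, and $\sqrt{E_i}=\sqrt{e_i}\,\Pi_i$.

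With rank-one elements secured, the remaining SIC conditions follow from linear independence. The channel now reads $\mathcal{E}(X)=\sum_i e_i(\tr\Pi_i X)\Pi_i$; evaluating at the state $X=\Pi_j$, using the hypothesis $\mathcal{E}(\Pi_j)=\frac{I+\Pi_j}{d+1}$, and substituting $I=\sum_k e_k\Pi_k$ yields
\begin{equation}
\sum_i e_i(\tr\Pi_i\Pi_j)\,\Pi_i=\frac{1}{d+1}\Pi_j+\frac{1}{d+1}\sum_i e_i\Pi_i.
\end{equation}
Since the $\{E_i\}$, and hence the $\{\Pi_i\}$, are linearly independent, I match coefficients to get $e_i\tr(\Pi_i\Pi_j)=\frac{1}{d+1}(\delta_{ij}+e_i)$. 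The diagonal $i=j$ forces $e_i=1/d$ for every $i$, whereupon the off-diagonal $i\neq j$ gives $\tr(\Pi_i\Pi_j)=\frac{1}{d+1}$. Thus $E_i=\frac{1}{d}\Pi_i$ is rank one and equiangular, with $\tr E_iE_j=\frac{d\delta_{ij}+1}{d^2(d+1)}$, so $\{E_i\}$ is a SIC and the converse is proved.
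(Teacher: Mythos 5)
Your proof is correct, and the second half of the converse takes a genuinely different route from the paper's. The forward direction and the rank-1 step are identical: both arguments note that the hypothesized action forces every pure input to saturate the bound of Lemma~\ref{lemma}, which pins $\tr\sqrt{\rho_i}=1$ and hence makes each $E_i=e_i\Pi_i$ rank one. Where you diverge is in extracting the weights and the equiangularity. The paper invokes the quasi-SIC Kraus operators of Appendix~A and the fact that any two Kraus representations of the same channel are related by a unitary remixing, $\sqrt{E_i}=\sum_j[U]_{ij}A_j$; it then shows $U$ is real orthogonal and computes the Gram matrix $\tr(\sqrt{E_i}\sqrt{E_j})$ through $U$. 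You instead exploit the rank-1 simplification $\mathcal{E}(X)=\sum_i e_i(\tr\Pi_iX)\Pi_i$, evaluate the channel on each projector $\Pi_j$, expand the identity as $\sum_k e_k\Pi_k$, and match coefficients over the linearly independent set $\{\Pi_i\}$ to read off $e_i\tr(\Pi_i\Pi_j)=\frac{\delta_{ij}+e_i}{d+1}$, from which $e_i=1/d$ and $\tr(\Pi_i\Pi_j)=\frac{1}{d+1}$ follow at once. Your version is more elementary and self-contained, needing neither the quasi-SIC construction nor the unitary-freedom theorem for Kraus representations; the paper's version, while heavier, sets up machinery (the quasi-SIC basis and the orthogonal remixing matrix) that it reuses in the discussion surrounding Eq.~\eqref{remix2}, in Theorem~\ref{entropyexchangethm}, and in Appendix~B, so the detour is not wasted there. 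One small point worth making explicit in your write-up: the coefficient-matching step is legitimate because the $d^2$ operators $\Pi_i$ are linearly independent in the $d^2$-dimensional operator space and therefore form a basis, so the expansion is unique.
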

\begin{proof}
  If a SIC exists, take $\mathcal{E}=\mathcal{E}_{\rm SIC}$. For the other direction, we will first demonstrate that the MIC which gives rise to this LMC must be rank-$1$. Having established this, we will be able to see that the unitaries relating different Kraus operators for this LMC are directly related to the MIC weights. This will allow us to show that the principal Kraus operators have the Gram matrix of a SIC and must form a SIC themselves.  

  For any pure state input, 
  \begin{equation}
      \lambda=\left(\frac{2}{d+1},\frac{1}{d+1},\ldots,\frac{1}{d+1}\right)\;.
  \end{equation}
  Thus the lower bound in Lemma \ref{lemma} is saturated. This can only occur when $\tr\sqrt{\rho_i}=1$ for all $i$ which implies that the MIC is rank-1.

  In \hyperref[quasiSIC]{Appendix A} we define and construct the \textit{quasi-SICs}, that is, sets of Hermitian, but not necessarily postive semidefinite, matrices $\{Q_i\}$ which have the same Hilbert--Schmidt inner products as SIC projectors, and we demonstrate that they furnish a Hermitian basis of constant-trace Kraus operators $A_i$ which give the same action as $\mathcal{E}$. Any other set of Kraus operators
  with the same effect will be related to this set by a unitary
  remixing, and since $\mathcal{E}$ is an LMC, we must have
  \begin{equation}\label{remix}
    \sqrt{E_i}=\sum_j[U]_{ij}A_j\;.
  \end{equation}
  Since we know the MIC is rank-$1$, we can trace both sides of
  this expression to obtain the identity
  $\sqrt{de_i}=\sum_j[U]_{ij}$. Furthermore, since the $A_j$ form a Hermitian basis, one may see that every element of $U$ must be real, rendering $U$ an orthogonal matrix. Then
  \begin{equation}
    \begin{split}
      \tr \!\left(\sqrt{E_i}\sqrt{E_j}\right)
      &= \frac{1}{d}\sum_{k,l}[U]_{ik}[U]_{jl}\tr Q_k Q_l\\
     &= \frac{1}{d}\sum_{k,l}[U]_{ik}[U]_{jl}\frac{d\delta_{kl}+1}{d+1}\\
            &=\frac{d\delta_{ij}+d\sqrt{e_ie_j}}{d(d+1)}\;.
    \end{split}
  \end{equation}
  When $i=j$, we have
  \begin{equation}
    e_i=\frac{1+e_i}{d+1}\implies e_i=\frac{1}{d}\;,
  \end{equation}
  and so
  \begin{equation}\label{sqrtgram}
    \tr \!\left(\sqrt{E_i}\sqrt{E_j}\right)=\frac{d\delta_{ij}+1}{d(d+1)}\;,
  \end{equation}
  and because $E_i$ is rank-$1$,
  \begin{equation}
    \tr(E_iE_j)=\frac{d\delta_{ij}+1}{d^2(d+1)}\;,
  \end{equation}
  that is, the MIC is a SIC.
\end{proof}

\section{Depolarizing L\"uders MIC Channels}
The SIC channel falls within a class of channels called
\textit{depolarizing channels}~\cite{Wilde:2017}. A depolarizing channel is a channel
\begin{equation}
  \mathcal{E}_\alpha(\rho) = \alpha\rho+\frac{1-\alpha}{d}I\;,\qquad
  \frac{-1}{d^2-1}\leq\alpha\leq1\;.
\end{equation}
The SIC channel corresponds to $\alpha=\frac{1}{d+1}$. One might wish
to know when an LMC is a depolarizing channel. From Theorem
\ref{existenceproof}, we know the only LMC with $\alpha=\frac{1}{d+1}$
is the SIC channel. What range of $\alpha$ are achievable by LMCs?

The answer to this question is any $\frac{1}{d+1}\leq\alpha<1$. To see
this, note that the eigenvalue spectrum for a depolarizing channel given a pure state input is
\begin{equation}\label{depolarizingspectrum}
  \lambda\left(\mathcal{E}_\alpha(\ketbra{\psi}{\psi})\right)
  = \left(\alpha+\frac{1-\alpha}{d},\frac{1-\alpha}{d},\ldots,
  \frac{1-\alpha}{d}\right)\;.
\end{equation} 
Recall the lower bound on the average
maximal eigenvalue for any LMC given a pure state input from Lemma \ref{lemma} is $\frac{2}{d+1}$. As the spectrum for a depolarizing channel is constant for pure state inputs, the lower bound on the average is the lower bound for any pure state input. If $\lambda_{\rm
  max} = \frac{1-\alpha}{d}$, then $\frac{1-\alpha}{d} \geq
\alpha+\frac{1-\alpha}{d} \implies \alpha\leq0$. The more negative
$\alpha$ is, the larger the maximal eigenvalue would be, so the
largest it can get is when $\alpha = \frac{-1}{d^2-1}$, in which case
$\lambda_{\rm max} = \frac{d}{d^2-1}<\frac{2}{d+1}$. So, $\lambda_{\rm
  max} = \alpha+\frac{1-\alpha}{d} \geq \frac{2}{d+1} \implies \alpha
\geq \frac{1}{d+1}$. When $\alpha=1$, the channel is the identity
channel, in other words, not depolarizing at all. It is easy to prove
that were this to be implemented by an LMC, it would require
$\sqrt{E_i}=\frac{1}{d}I$, but this does not lead to a linearly
independent set and is not a MIC. If a SIC exists, however, a
depolarizing LMC exists for any $\frac{1}{d+1} \leq \alpha < 1$, as
the next proposition shows.
\begin{prop}
    Suppose a SIC exists in dimension $d$. For a nonzero $\beta\in\left[\frac{-1}{d-1},1\right]$ satisfying
  \begin{equation}
    \alpha=1-\frac{\left(\sqrt{1-\beta+d\beta}-\sqrt{1-\beta}\right)^2}{d+1}\;,
  \end{equation}
  or, equivalently,
  \begin{equation}
      \begin{split}
          \beta = \frac{1}{d^2}& \Bigr ( (d-2)(d+1)(1-\alpha)\\
          & +2\sqrt{(d+1)(1-\alpha)(1-\alpha+d^2\alpha)} \Bigr )\;,
   \end{split}
  \end{equation}
  the MIC
  \begin{equation}\label{sicmix}
    E_i=\frac{\beta}{d}\Pi_i+\frac{1-\beta}{d^2}I\;,
  \end{equation}
  where $\Pi_i$ is a SIC projector, gives rise to the LMC
  \begin{equation}\label{depolarizingLMC}
    \mathcal{E}_\alpha(\rho) =
    \alpha\rho+\frac{1-\alpha}{d}I\;,\qquad
    \frac{1}{d+1}\leq\alpha<1\;.
  \end{equation}
\end{prop}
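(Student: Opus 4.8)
The plan is to verify directly that the operators in~\eqref{sicmix} form a legitimate MIC and then to evaluate the LMC action $\mathcal{E}(\rho)=\sum_i\sqrt{E_i}\rho\sqrt{E_i}$ with the help of the SIC identity already recorded in~\eqref{sicchannel}. First I would check the MIC axioms. Summing~\eqref{sicmix} over $i$ and using $\sum_i\Pi_i=dI$ (which follows from $\sum_iH_i=I$) gives $\sum_iE_i=I$. Each $E_i$ has only two distinct eigenvalues, $\frac{1+(d-1)\beta}{d^2}$ on the range of $\Pi_i$ and $\frac{1-\beta}{d^2}$ on its orthogonal complement, so positivity holds precisely for $\beta\in[\frac{-1}{d-1},1]$. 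Linear independence holds whenever $\beta\neq 0$: if $\sum_ic_iE_i=0$, then taking a trace forces $\sum_ic_i=0$, whereupon $\frac{\beta}{d}\sum_ic_i\Pi_i=0$ and the independence of the SIC projectors makes every $c_i$ vanish. This accounts for the hypotheses placed on $\beta$.

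Next I would compute the principal Kraus operators. From the two-eigenvalue structure,
\begin{equation}
  \sqrt{E_i}=(s-t)\Pi_i+tI,\qquad s=\frac{\sqrt{1-\beta+d\beta}}{d},\quad t=\frac{\sqrt{1-\beta}}{d}.
\end{equation}
Substituting into $\mathcal{E}(\rho)$ and expanding produces three kinds of terms. The quadratic piece $\sum_i\Pi_i\rho\Pi_i$ is evaluated by~\eqref{sicchannel} as $\frac{d}{d+1}[(\tr\rho)I+\rho]$; the two cross terms collapse through $\sum_i\Pi_i=dI$ into $2(s-t)td\rho$; and the remaining $t^2\rho$ term carries a factor of $d^2$ from the number of MIC elements. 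Collecting everything, $\mathcal{E}(\rho)$ is of the form $c_II+c_\rho\rho$ with $c_I=(s-t)^2\frac{d}{d+1}$.

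Matching $c_I$ against $\frac{1-\alpha}{d}$ immediately yields $1-\alpha=\frac{(\sqrt{1-\beta+d\beta}-\sqrt{1-\beta})^2}{d+1}$, the first displayed relation. I would then invoke trace preservation, which is automatic for an LMC since $\sum_iE_i=I$: it forces $c_Id+c_\rho=1$, hence $c_\rho=\alpha$, confirming that the output is exactly $\alpha\rho+\frac{1-\alpha}{d}I$ with no residual term. The equivalent closed form for $\beta$ follows by reading the first relation as a quadratic in $\beta$: isolating the one surviving square root, squaring, and simplifying the discriminant down to $4(d+1)(1-\alpha)(1-\alpha+d^2\alpha)$ returns the stated expression, where the $+$ branch is the one compatible with $\beta\in(0,1]$. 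The range $\frac{1}{d+1}\leq\alpha<1$ then follows by tracking the first relation as $\beta$ sweeps its domain: $\beta\to0$ sends $\alpha\to1$ (the excluded identity channel), $\beta=1$ gives $\alpha=\frac{1}{d+1}$ (the SIC channel itself), and $\alpha$ decreases monotonically between these endpoints, so the positive branch already exhausts the claimed interval.

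I expect the main obstacle to be organizational rather than conceptual. The delicate point is the bookkeeping in the expansion of $\sum_i\sqrt{E_i}\rho\sqrt{E_i}$, in particular correctly tracking the cross terms and the $d^2$-fold multiplicity and then verifying that the coefficient of $\rho$ genuinely reduces to $\alpha$ rather than leaving an unwanted remainder. The algebraic inversion to the second formula is the messiest step, chiefly because squaring can introduce spurious roots; I would confirm the sign condition $2+(d-2)\beta-(d+1)(1-\alpha)\geq0$ to be certain the correct branch is retained.
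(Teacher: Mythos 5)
Your proposal is correct and follows essentially the same route the paper sketches: it arrives at exactly the principal Kraus operators given in~\eqref{principalkraus} and then carries out the ``routine calculation'' via the SIC identity~\eqref{sicchannel}, with the trace-preservation shortcut for the coefficient of $\rho$ and the branch check on the quadratic for $\beta$ both checking out. The only difference is that you spell out the MIC axioms and the algebra that the paper leaves to the reader.
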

\noindent One may check that the principal Kraus operators associated
with the MIC elements \eqref{sicmix} are given by
\begin{equation}\label{principalkraus}
  A_i=\frac{\sqrt{1-\beta+d\beta}-\sqrt{1-\beta}}{d}\Pi_i
  + \frac{\sqrt{1-\beta}}{d}I\;,
\end{equation}
and then a routine calculation and the characterization of the SIC
channel from Theorem \ref{existenceproof} confirms the claim of the
proposition.

\begin{remark}
When $\beta = 1$, the MIC $\{E_i\}$ is the original SIC, whereas when
$\beta$ equals its minimum allowed value $-1/(d-1)$, it is the
rank-$(d-1)$ equiangular MIC
\begin{equation}\label{minbeta}
    E_i = \frac{1}{d(d-1)}(-\Pi_i+I)\;,
\end{equation}
indirectly noted in prior work~\cite{DeBrota:2018b, Zhu:2016,
  DeBrota:2017} for extremizing a nonclassicality measure based on
negativity of quasi-probability.
\end{remark}

Do any LMCs give rise to depolarizing channels in dimensions where one does not have access to a SIC? If we
replace the SIC projector in equation \eqref{principalkraus} with a
quasi-SIC, we may form Kraus operators effecting the same depolarizing channel,
\begin{equation}\label{quasikraus}
  K_i=\frac{\sqrt{1-\beta+d\beta}-\sqrt{1-\beta}}{d}Q_i
  + \frac{\sqrt{1-\beta}}{d}I\;.
\end{equation}
From \eqref{qsicsquaresum}, one can check that these will square to a valid MIC. For arbitrary $\beta$, however, $K_i$ may fail to be positive semidefinite and would therefore not be a principal Kraus operator. From the definition of a quasi-SIC, one sees that the eigenvalues of $Q_i$ are bounded below by $-1$. Even in this worst case, one can easily derive that $K_i$ will be positive semidefinite for any nonzero $\beta\leq\frac{3}{d+3}$. This range of $\beta$ entitles any
$\alpha\geq\frac{d^2-d-1}{d^2-1}$. (The minimal $\alpha$ is obtained from the most negative $\beta$.) When $d=2$, this minimal $\alpha$ matches the lower bound achieved by the SIC channel because every quasi-SIC is a SIC in this dimension, but for all $d>2$ the inequality is strict and monotonically increases with dimension. In practice, the minimal eigenvalue among all of the quasi-SIC operators one constructs will be significantly larger than
$-1$, and so, depending on how close to a SIC one can make their quasi-SIC, one should be able to get significantly closer to the SIC bound than the $\alpha$ we have derived.

Fully classifying the MICs giving depolarizing LMCs for particular values of $\alpha>\frac{1}{d+1}$ appears to be a difficult problem; it is not clear what properties these MICs must satisfy. For example, squaring the $K_i$ operators from equation \eqref{quasikraus} results in MICs which are dependent on one's quasi-SIC implementation and need not be equiangular as the family in equation \eqref{sicmix} was. All principal Kraus operators which give rise to a depolarizing channel with a given $\beta$ (and corresponding $\alpha$) will be related to the operators \eqref{quasikraus} by way of a unitary remixing satisfying
\begin{equation}\label{remix2}
  \sqrt{E_i}=\sum_j[U]_{ij}K_j\;
\end{equation}
for some unitary $U$. As in the proof of Theorem \ref{existenceproof}, all the elements of the unitary must be real and so it is actually an orthogonal matrix. We have not been able to identify any further necessary characteristics of the $U$ in the completely general case, but the following notable restriction yielded further structure.
A MIC is \emph{unbiased} if the traces of all the elements are
equal, that is, if $e_i=\frac{1}{d}$ for all $i$. MICs in this class have the property that their measurement outcome probabilities for
the ``garbage state'' $\frac{1}{d}I$ input is the flat probability distribution over $d^2$ outcomes. From the standpoint of \cite{DeBrota:2018a}, this means they preserve the intution that the state $\frac{1}{d}I$ should correspond to a prior with complete outcome indifference in a reference process scenario and accordingly warrant special attention. If we demand that $\{E_i\}$ be unbiased, then
it is necessary, but not sufficient, that the orthogonal matrix remixing
\eqref{quasikraus} be doubly quasistochastic 
(see \hyperref[dblyquasi]{Appendix B}). 
\section{Entropic Optimality}
One way to evaluate the performance of a quantum channel is by using
measures based on von Neumann entropy,
\begin{equation}
    S(\rho)=-\tr\rho\log\rho\;.
\end{equation}
In this section, we consider
two such, proving in each case an optimality result for LMCs
constructed from SICs. To understand the conceptual significance of
the bounds we will derive, consider again Alice who is preparing to
send a quantum system through an LMC. Alice initially ascribes the
quantum state $\rho$ to her system, and before sending the system
through the channel, she computes $\mathcal{E}(\rho)$. After eliciting
a measurement outcome, Alice will update her quantum-state assignment,
not to $\mathcal{E}(\rho)$ but rather to whichever $\rho'_i$
corresponds to the outcome $E_i$ that actually transpires. The state
$\mathcal{E}(\rho)$ will generally be mixed, while $\rho'_i$ will be a
pure state in the case of a rank-1 MIC. This change from mixed to pure represents
a sharpening of Alice's expectations about her quantum system. We can
quantify this in entropic terms, even for MICs that are not rank-1. In
fact, for pure state inputs we can calculate Alice's \emph{typical sharpening of
expectations} by averaging the post-channel von Neumann entropy over the possible input states using the Haar measure, denoted $\overline{S(\mathcal{E}(\ketbra{\psi}{\psi}))}$. We will
see that SIC channels give the largest possible typical sharpening of
expectations.

In the following we make use of a partial ordering on real vectors arranged in nonincreasing order called \textit{majorization}~\cite{Marshall:2011}. A real vector $x$ rearranged into nonincreasing order is written as $x^\downarrow$. Then we say a vector $x$ majorizes a vector $y$, denoted $x\succ y$, if all of the leading partial sums of $x^\downarrow$ are greater than or equal to the leading partial sums of $y^\downarrow$ and if the sum of all the elements of each is equal. Explicitly, $x\succ y$ if
\begin{equation}
    \sum_{i=1}^kx^\downarrow_i\geq\sum_{i=1}^ky^\downarrow_i\;,
\end{equation}
for $k=1\ldots N-1$ and $\sum_ix_i=\sum_iy_i$. Speaking heuristically, if $x \succ y$, then $y$ is a flatter vector than $x$. A \textit{Schur convex} function is a function $f$ satisfying the implication $x\succ y\implies f(x)\geq f(y)$. A function is \textit{strictly} Schur convex if the inequality is strict when $x^\downarrow\neq y^\downarrow$. When the inequality is reversed the function is called \textit{Schur concave}. 

\begin{theorem}\label{avgentropybnd}
  Let $\mathcal{E}$ be an
  LMC. $\overline{S(\mathcal{E}(\ketbra{\psi}{\psi}))} \leq \log(d+1)
  - \frac{2}{d+1}\log 2$ with equality achievable if a SIC exists in
  dimension $d$.
\end{theorem}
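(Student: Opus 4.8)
The plan is to reduce the bound to a one-variable optimization governed by the maximal eigenvalue and then invoke Lemma~\ref{lemma}. First I would record that the SIC channel realizes the claimed right-hand side exactly: from the spectrum $\lambda=(2/(d+1),1/(d+1),\ldots,1/(d+1))$ one computes $S(\mathcal{E}_{\rm SIC}(\ketbra{\psi}{\psi}))=\log(d+1)-\frac{2}{d+1}\log 2$ for every pure input, which settles the equality claim once the inequality is in hand.

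For the inequality, the key fact is that von Neumann entropy is a Schur-concave function of the eigenvalue spectrum. Fix a pure state and let $t=\lambda_{\rm max}$ be the largest eigenvalue of $\mathcal{E}(\ketbra{\psi}{\psi})$. Any probability vector on $d$ components whose largest entry is $t$ majorizes the ``flat-tail'' vector $f_t=(t,\frac{1-t}{d-1},\ldots,\frac{1-t}{d-1})$, since the largest of the remaining $d-1$ entries is at least their average $\frac{1-t}{d-1}$ while $f_t$ is flat beyond its first entry. Schur concavity then gives the per-state bound
\begin{equation}
S(\mathcal{E}(\ketbra{\psi}{\psi}))\le g(t),\qquad g(t):=-t\log t-(1-t)\log\tfrac{1-t}{d-1}\;.
\end{equation}
Averaging over the Haar measure yields $\overline{S(\mathcal{E}(\ketbra{\psi}{\psi}))}\le\overline{g(\lambda_{\rm max})}$.

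It remains to control the averaged quantity using only $\overline{\lambda}_{\rm max}\ge\frac{2}{d+1}$ from Lemma~\ref{lemma}. A direct check gives $g''(t)=-\frac{1}{t}-\frac{1}{1-t}<0$, so $g$ is concave on $(0,1)$, and $g'(t)=\log\frac{1-t}{t(d-1)}$, whence $g'(2/(d+1))=-\log 2<0$ and $g$ is decreasing for $t>1/d$. Concavity supplies the tangent-line upper bound $g(t)\le g(a)+g'(a)(t-a)$ valid for all $t$; taking $a=\frac{2}{d+1}$ and averaging gives
\begin{equation}
\overline{g(\lambda_{\rm max})}\le g\!\left(\tfrac{2}{d+1}\right)-\log 2\left(\overline{\lambda}_{\rm max}-\tfrac{2}{d+1}\right)\le g\!\left(\tfrac{2}{d+1}\right)\;,
\end{equation}
where the last step uses $g'(2/(d+1))=-\log 2<0$ together with $\overline{\lambda}_{\rm max}\ge\frac{2}{d+1}$. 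Finally $g(2/(d+1))=\log(d+1)-\frac{2}{d+1}\log 2$, completing the bound. Equality holds when a SIC exists because then the spectrum equals $f_{2/(d+1)}$ for every input and $\overline{\lambda}_{\rm max}=\frac{2}{d+1}$, saturating both the Schur-concavity step and the monotonicity step.

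The main obstacle is the averaging step: majorization compares individual spectra and does not, on its own, control an average of entropies. The device that makes it work is collapsing each spectrum to the scalar summary $g(\lambda_{\rm max})$ and then exploiting that $g$ is simultaneously concave (so that Jensen, or equivalently the tangent line, pushes the bound through the average) and decreasing at $\frac{2}{d+1}$ (so that Lemma~\ref{lemma}'s inequality points in the favorable direction). Verifying that these two properties cooperate---rather than merely checking each calculus fact in isolation---is the crux of the argument.
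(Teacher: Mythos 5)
Your proof is correct, and it reaches the bound by a genuinely different assembly of the same two ingredients (Lemma~\ref{lemma} and Schur concavity) than the paper uses. The paper first pushes the Haar average inside the entropy at the level of the \emph{full spectrum}, $\overline{H(\lambda)} \le H(\overline{\lambda})$, using concavity of Shannon entropy on the probability simplex, and then applies Schur concavity a single time to the averaged spectrum via the chain $\overline{\lambda} \succ \bigl(\overline{\lambda}_{\rm max}, \tfrac{1-\overline{\lambda}_{\rm max}}{d-1},\ldots\bigr) \succ \bigl(\tfrac{2}{d+1},\tfrac{1}{d+1},\ldots\bigr)$. You instead apply Schur concavity \emph{pointwise} to collapse each spectrum to the scalar $g(\lambda_{\rm max})$ and then carry the average through a tangent-line bound, which obliges you to verify both the concavity of $g$ and the sign of $g'\!\left(\tfrac{2}{d+1}\right)=-\log 2$. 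Your monotonicity computation is the one-variable shadow of the second link in the paper's majorization chain (decreasing the top entry of a flat-tail vector toward $2/(d+1)$ flattens it and so raises its entropy); the paper's ordering absorbs that fact into the majorization and avoids the calculus. Both arguments are complete and both correctly identify the averaging step as the point requiring care; yours makes explicit that the argument needs $g$ to be simultaneously concave and decreasing at the relevant point, which is a worthwhile observation, while the paper's vector-level Jensen step is shorter. Your treatment of the equality case via the SIC spectrum matches the paper's.
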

\begin{proof}
    From Lemma \ref{lemma}, we know that the
  average maximal eigenvalue for the output of an arbitrary LMC given
  a pure state input is lower bounded by $\frac{2}{d+1}$. This implies
  \begin{equation}\label{avgmaj}
      \begin{split}
      \overline{\lambda}&\succ
    \left(\overline{\lambda}_{\rm max},
    \frac{1-\overline{\lambda}_{\rm max}}{d-1},
    \ldots,\frac{1-\overline{\lambda}_{\rm max}}{d-1}\right)\\
   & \succ \left(\frac{2}{d+1},\frac{1}{d+1},\ldots,\frac{1}{d+1}\right)\;.
   \end{split}
  \end{equation}
  The Shannon entropy $H(P)=-\sum_iP_i\log P_i$ is a concave and Schur concave function of probability distributions. Furthermore, the von Neumann entropy of a density matrix is equal to the Shannon entropy of its eigenvalue spectrum. Using these facts we have
  \begin{equation}
      \begin{split}
      \overline{S(\mathcal{E}(\ketbra{\psi}{\psi}))}&=\overline{H\big(\lambda(\mathcal{E}(\ketbra{\psi}{\psi}))\big)}\\
      &\leq H\!\left(\overline{\lambda(\mathcal{E}(\ketbra{\psi}{\psi}))}\right)\\
      &\leq \log(d+1)
  - \frac{2}{d+1}\log 2\;.
  \end{split}
  \end{equation}
If a SIC exists, taking $\mathcal{E}=\mathcal{E}_{\rm SIC}$
  achieves this upper bound.
\end{proof}
Theorem \ref{avgentropybnd} would have been more forceful if the upper bound were saturated ``only if'' a SIC exists, but we were unable to demonstrate this property, and so we leave it as a conjecture:
\begin{conjecture}\label{onlyif}
  Equality is achievable in the statement of Theorem
  \ref{avgentropybnd} only if a SIC exists in dimension $d$.
\end{conjecture}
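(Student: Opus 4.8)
The plan is to show that equality in Theorem \ref{avgentropybnd} forces $\mathcal{E}$ to be exactly the SIC channel, after which Theorem \ref{existenceproof} immediately supplies a SIC. The entire argument is an analysis of when the inequalities used in the proof of Theorem \ref{avgentropybnd} are \emph{simultaneously} saturated, so the first task is to read off the equality conditions one inequality at a time. Equality in the Jensen step $\overline{H(\lambda)}\le H(\overline{\lambda})$, together with the strict concavity of the Shannon entropy, forces the spectrum $\lambda(\mathcal{E}(\ketbra{\psi}{\psi}))$ to be constant for Haar-almost-every pure input, and hence — by continuity of the eigenvalues of the continuous map $\ketbra{\psi}{\psi}\mapsto\mathcal{E}(\ketbra{\psi}{\psi})$ — for every input. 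Equality in the Schur step, using that $H$ is \emph{strictly} Schur concave, then forces this common spectrum, which equals $\overline{\lambda}$, to be precisely $(\tfrac{2}{d+1},\tfrac{1}{d+1},\ldots,\tfrac{1}{d+1})$; in particular $\overline{\lambda}_{\rm max}=\tfrac{2}{d+1}$, so the bound of Lemma \ref{lemma} is saturated.

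Next I would re-examine the proof of Lemma \ref{lemma}: this saturation forces \emph{both} of its inequalities to hold with equality. The condition $\tr\sqrt{\rho_i}=1$ shows the MIC is rank-$1$ (exactly as in the proof of Theorem \ref{existenceproof}), so I may write $\sqrt{E_i}=\sqrt{e_i}\,\Pi_i$ with $\Pi_i$ a rank-$1$ projector. Moreover the first inequality of the lemma, averaged over the Haar measure, gives $\overline{\lambda}_{\rm max}=\overline{\bra{\psi}\mathcal{E}(\ketbra{\psi}{\psi})\ket{\psi}}$.

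The crux is to upgrade eigenvalue data to eigenvector data — to show not merely that $\mathcal{E}(\ketbra{\psi}{\psi})$ has the right spectrum, but that its top eigenvector is $\ket{\psi}$ itself. The key observation is that for a rank-$1$ MIC the Haar average computed inside Lemma \ref{lemma} evaluates \emph{exactly}: since $(\tr\sqrt{E_i})^2=e_i$ one finds $\overline{\bra{\psi}\mathcal{E}(\ketbra{\psi}{\psi})\ket{\psi}}=\tfrac{1}{d(d+1)}\big(\sum_i e_i+\sum_i e_i\big)=\tfrac{2}{d+1}$. But pointwise $\bra{\psi}\mathcal{E}(\ketbra{\psi}{\psi})\ket{\psi}\le\lambda_{\rm max}=\tfrac{2}{d+1}$, so the nonnegative continuous function $\tfrac{2}{d+1}-\bra{\psi}\mathcal{E}(\ketbra{\psi}{\psi})\ket{\psi}$ has vanishing Haar integral and must vanish identically. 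Thus $\ket{\psi}$ saturates the Rayleigh quotient of $\mathcal{E}(\ketbra{\psi}{\psi})$ for every $\ket{\psi}$; as the top eigenvalue $\tfrac{2}{d+1}$ is nondegenerate, $\ket{\psi}$ is its eigenvector, and with the remaining spectrum flat at $\tfrac{1}{d+1}$ this pins down $\mathcal{E}(\ketbra{\psi}{\psi})=\tfrac{I+\ketbra{\psi}{\psi}}{d+1}$.

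Finally, since the rank-$1$ projectors span $\mathcal{L}(\mathcal{H}_d)$ and $\mathcal{E}$ is linear, this determines $\mathcal{E}(\rho)=\tfrac{I+\rho}{d+1}$ for all $\rho$, whereupon Theorem \ref{existenceproof} guarantees a SIC. I expect the eigenvector-recovery step to be the real obstacle: the majorization and entropy machinery constrains only the \emph{spectrum} of the output, and it is the averaged Rayleigh-quotient identity — squeezing a pointwise inequality against an exactly computable mean — that forces the output eigenbasis to align with the input. A careful write-up must also verify that the strict-concavity and strict-Schur-concavity equality conditions are genuinely independent and jointly available, and should confirm the continuity/density passage from almost-every to every pure state.
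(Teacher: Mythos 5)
You should first be aware that the paper does \emph{not} prove this statement: it is explicitly left as Conjecture~\ref{onlyif}, and the postscript only sketches a partial argument for unbiased LMCs in $d=2$ via group covariance and the 2-design property of SICs. Your argument, as far as I can verify, is a complete proof of the general case, and it succeeds precisely where the authors' sketch stalls. The decisive ingredient you add is the equality condition of the Jensen step: the authors extract only that $\overline{\lambda}$ equals the extremal spectrum, whereas you observe that \emph{strict} concavity of $H$ forces $\lambda^{\downarrow}(\mathcal{E}(\ketbra{\psi}{\psi}))$ to be constant almost everywhere, hence (by continuity of sorted eigenvalues and full support of the Haar measure) everywhere, and equal to $\bigl(\tfrac{2}{d+1},\tfrac{1}{d+1},\ldots,\tfrac{1}{d+1}\bigr)$ by strict Schur concavity. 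Your second key move — that for a rank-$1$ MIC the Haar average of the Rayleigh quotient evaluates \emph{exactly} to $\tfrac{2}{d+1}$, so that squeezing $\bra{\psi}\mathcal{E}(\ketbra{\psi}{\psi})\ket{\psi}\le\lambda_{\rm max}=\tfrac{2}{d+1}$ against this mean forces pointwise saturation and hence eigenvector alignment via nondegeneracy of the top eigenvalue — is exactly the mechanism the paper itself deploys for the single input $\ket{M\!E}$ in Theorem~\ref{entropyexchangethm}, here upgraded to all pure inputs. I checked the individual steps (the majorization chain, the rank-$1$ deduction from $\tr\sqrt{\rho_i}=1$, the identity $\int\abs{\bra{\psi}A\ket{\psi}}^2d\Omega_\psi=\frac{(\tr A)^2+\tr A^2}{d(d+1)}$, the Rayleigh-quotient equality condition, and the linearity/spanning conclusion feeding into Theorem~\ref{existenceproof}) and found no gap; your construction even correctly excludes the natural would-be counterexample $\rho\mapsto\frac{I+U\rho U^\dag}{d+1}$, since such a channel with rank-$1$ Kraus structure would violate the exact average $\tfrac{2}{d+1}$.

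Two points deserve explicit care in a full write-up. First, the Jensen equality condition for the vector-valued random variable $\lambda^{\downarrow}$ should be argued via a supporting hyperplane of $H$ at $\overline{\lambda}$; this is unproblematic here because the Schur step independently pins $\overline{\lambda}=\bigl(\tfrac{2}{d+1},\tfrac{1}{d+1},\ldots\bigr)$ to the interior of the simplex, where the gradient of $H$ is finite. Second, the passage from almost-every to every pure state should be stated once for each of the two continuous functions you use (the sorted spectrum and the Rayleigh quotient). With those details written out, this settles the conjecture, and you should present it as new rather than as a reconstruction of anything in the paper.
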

\noindent We were, however, able to prove a strong SIC optimality result in the setting of bipartite
systems, applicable for example to Bell-test scenarios. The
\emph{entropy exchange} for a channel $\mathcal{E}$ upon input by
state $\rho$ is defined~\cite{Nielsen:2010} to be the von Neumann
entropy of the result of sending one half of a purification of $\rho$,
$\ket{\Psi_\rho}$, through the channel:
\begin{equation}
  S(\rho,\mathcal{E})
  := S\big(I\otimes\mathcal{E}(\ketbra{\Psi_\rho}{\Psi_{\rho}})\big)\;.
\end{equation}

\begin{theorem}\label{entropyexchangethm}
  Let $\mathcal{E}$ be an LMC. Then $S\!\left(\frac{1}{d}I,\mathcal{E}\right) \leq
  \log d+\frac{d-1}{d}\log(d+1)$ with equality achievable iff a SIC
  exists in dimension $d$.
\end{theorem}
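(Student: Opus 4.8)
The plan is to convert the entropy exchange for the input $\frac{1}{d}I$ into the von Neumann entropy of a single $d^2\times d^2$ matrix and then bound that entropy by a majorization argument mirroring Theorem \ref{avgentropybnd}. I would take the purification of $\frac{1}{d}I$ to be the maximally entangled state $\ket{\Psi}=\frac{1}{\sqrt{d}}\sum_k\ket{k}\otimes\ket{k}$ and write the post-channel state $\sigma=I\otimes\mathcal{E}(\ketbra{\Psi}{\Psi})=\sum_i\ketbra{\phi_i}{\phi_i}$ with $\ket{\phi_i}=(I\otimes\sqrt{E_i})\ket{\Psi}$. Because a state of the form $\sum_i\ketbra{\phi_i}{\phi_i}$ shares its nonzero spectrum with the Gram matrix of the $\ket{\phi_i}$, and since $\braket{\phi_i}{\phi_j}=\bra{\Psi}(I\otimes\sqrt{E_i}\sqrt{E_j})\ket{\Psi}=\frac{1}{d}\tr(\sqrt{E_i}\sqrt{E_j})$, this gives $S\!\left(\frac{1}{d}I,\mathcal{E}\right)=S(W)$, where $W_{ij}:=\frac{1}{d}\tr(\sqrt{E_i}\sqrt{E_j})$ is a $d^2\times d^2$ density matrix (positive semidefinite with unit trace).

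Next I would establish the one spectral input needed: $\lambda_{\rm max}(W)\geq\frac{1}{d}$. Since $\sigma$ and $W$ share nonzero eigenvalues, it suffices to test $\sigma$ on $\ket{\Psi}$ itself, giving $\lambda_{\rm max}(\sigma)\geq\bra{\Psi}\sigma\ket{\Psi}=\frac{1}{d^2}\sum_i(\tr\sqrt{E_i})^2$. Writing $\tr\sqrt{E_i}=\sqrt{e_i}\,\tr\sqrt{\rho_i}$ and using $\tr\sqrt{\rho_i}\geq1$ exactly as in Lemma \ref{lemma}, this is at least $\frac{1}{d^2}\sum_ie_i=\frac{1}{d}$. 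The bound then follows as in Theorem \ref{avgentropybnd}: any probability vector whose largest entry is at least $\frac{1}{d}$ majorizes the ``one large, rest flat'' spectrum $\lambda^{*}=(\frac{1}{d},\frac{1}{d(d+1)},\dots,\frac{1}{d(d+1)})$, so strict Schur concavity of the Shannon entropy (a consequence of its strict concavity) gives $S(W)\leq H(\lambda^{*})=\log d+\frac{d-1}{d}\log(d+1)$, with equality if and only if the spectrum of $W$ equals $\lambda^{*}$. The ``if'' half is then immediate: for $\mathcal{E}_{\rm SIC}$ one has $\sqrt{E_i}=\frac{1}{\sqrt{d}}\Pi_i$, so $W_{ij}=\frac{d\delta_{ij}+1}{d^2(d+1)}$ has exactly the spectrum $\lambda^{*}$ and saturates the bound.

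The substance is the ``only if'' direction, which I expect to be the main obstacle and the reason this theorem achieves the sharp ``iff'' that Conjecture \ref{onlyif} only hopes for. Suppose equality holds, so $W$ has spectrum $\lambda^{*}$. Then $\lambda_{\rm max}(W)=\frac{1}{d}$ forces equality in the chain above, hence $\sum_i(\tr\sqrt{E_i})^2=d$ and $\tr\sqrt{\rho_i}=1$ for every $i$; as in the proof of Theorem \ref{existenceproof}, this means the MIC is rank-$1$, so $\sqrt{E_i}=\sqrt{e_i}\,\Pi_i$ and $W_{ij}=\frac{1}{d}\sqrt{e_ie_j}\abs{\braket{\psi_i}{\psi_j}}^2$. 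I would then pin down the eigenvector data: the top eigenvalue $\frac{1}{d}$ of $\lambda^{*}$ is nondegenerate, and the saturation $\bra{\Psi}\sigma\ket{\Psi}=\lambda_{\rm max}(\sigma)$ makes $\ket{\Psi}$ its eigenvector, so the corresponding eigenvector of $W$ is $w_i\propto\braket{\phi_i}{\Psi}=\frac{1}{d}\tr\sqrt{E_i}\propto\sqrt{e_i}$. Hence $W=\frac{1}{d(d+1)}I+\frac{1}{d+1}\ketbra{w}{w}$ with $w_i=\sqrt{e_i/d}$.

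Finally I would read off the SIC conditions by matching the two expressions for $W$. Equating diagonal entries, $\frac{e_i}{d}=\frac{1+e_i}{d(d+1)}$ forces $e_i=\frac{1}{d}$ for all $i$ (the MIC is unbiased), and then equating off-diagonal entries forces $\abs{\braket{\psi_i}{\psi_j}}^2=\frac{1}{d+1}$ for $i\neq j$. Thus $E_i=\frac{1}{d}\Pi_i$ with $\tr(\Pi_i\Pi_j)=\frac{d\delta_{ij}+1}{d+1}$, which is precisely a SIC, completing the ``only if'' direction. The delicate points to get right are the identification of the unique top eigenvector (requiring nondegeneracy of $\frac{1}{d}$ in $\lambda^{*}$ together with saturation of the test-vector bound) and the strictness of the Schur concavity, which is what upgrades equality from a statement about $\lambda_{\rm max}$ alone to the full spectral identity $\lambda(W)=\lambda^{*}$ that the matrix-entry matching requires.
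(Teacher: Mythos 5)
Your proof is correct, and while the first half (the purification, the test-vector bound $\lambda_{\rm max}\geq\frac{1}{d^2}\sum_i(\tr\sqrt{E_i})^2\geq\frac{1}{d}$, the majorization by $(\frac{1}{d},\frac{1}{d(d+1)},\ldots)$, and Schur concavity) coincides with the paper's, your ``only if'' direction is genuinely different. The paper proceeds on the $d^2\times d^2$ operator $I\otimes\mathcal{E}(\ketbra{M\!E}{M\!E})$ itself: strict Schur concavity fixes its spectrum, the spectral decomposition plus completeness of the eigenprojectors yields $I\otimes\mathcal{E}(\ketbra{M\!E}{M\!E})=\frac{1}{d+1}\left(\frac{1}{d}I\otimes I+\ketbra{M\!E}{M\!E}\right)$, and then quasi-SIC identities from Appendix A are used to show the channel acts as $\rho\mapsto\frac{I+\rho}{d+1}$, at which point Theorem \ref{existenceproof} is invoked. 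You instead pass to the Gram matrix $W_{ij}=\frac{1}{d}\tr(\sqrt{E_i}\sqrt{E_j})$ of the vectors $(I\otimes\sqrt{E_i})\ket{\Psi}$, which is isospectral to the post-channel state, identify its full spectral decomposition $W=\frac{1}{d(d+1)}I+\frac{1}{d+1}\ketbra{w}{w}$ with $w_i=\sqrt{e_i/d}$ (using nondegeneracy of the top eigenvalue and saturation of the test-vector bound to pin down the eigenvector), and read off $e_i=\frac{1}{d}$ and $\abs{\braket{\psi_i}{\psi_j}}^2=\frac{1}{d+1}$ by entrywise comparison. Your route is more self-contained --- it bypasses the quasi-SIC construction and the appeal to Theorem \ref{existenceproof} entirely, and it directly exposes the SIC Gram matrix rather than the SIC channel action --- at the cost of a slightly more delicate eigenvector-identification step, which you handle correctly. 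The paper's route, by contrast, reuses machinery already built for Theorem \ref{existenceproof} and makes the conceptual point that saturation forces the channel to be the SIC channel. Both are valid.
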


\begin{proof}
    The purification of the state $\frac{1}{d}I$ is the maximally entangled state $\ket{M\!E}:=\frac{1}{\sqrt{d}}\sum_i\ket{ii}$. Let $\lambda$ be the eigenvalues of
  $I\otimes\mathcal{E}(\ketbra{M\!E}{M\!E})$ arranged in nonincreasing
  order. We may lower bound the maximal eigenvalue as follows:
\begin{equation}\label{maxbound}
  \begin{split}
    \lambda_{\rm max}
    &\geq\bra{M\!E}I\otimes \mathcal{E}(\ketbra{M\!E}{M\!E})\ket{M\!E}\\
    &=\frac{1}{d^2}\sum_{ijkl}\bra{ii}I \otimes
    \mathcal{E}(\ketbra{jj}{kk})\ket{ll}\\
    &=\frac{1}{d^2}\sum_{ijkl}\bra{ii}(\ketbra{j}{k} \otimes
    \mathcal{E}(\ketbra{j}{k}))\ket{ll}\\
    &=\frac{1}{d^2}\sum_{il}\bra{i}\mathcal{E}(\ketbra{i}{l})\ket{l}\\
    &=\frac{1}{d^2}\sum_{ijl}\bra{i}\sqrt{E_j}\ketbra{i}{l}\sqrt{E_j}\ket{l}\\
    &=\frac{1}{d^2}\sum_j\left(\tr\!\sqrt{E_j}\right)^2
    \geq \frac{1}{d^2}\sum_je_j=\frac{1}{d}\;.
  \end{split}
\end{equation}
Thus,
\begin{equation}
    \begin{split}
  \lambda &\succ \left(\lambda_{\rm max}, \frac{1-\lambda_{\rm max}}{d^2-1},
  \ldots,\frac{1-\lambda_{\rm max}}{d^2-1}\right)\\
  &\succ \left(\frac{1}{d},\frac{1}{d(d+1)},\ldots,\frac{1}{d(d+1)}\right)\;.
  \end{split}
\end{equation}
The upper bound now follows from the Schur concavity of von Neumann entropy.

If a SIC exists, it is easy to verify that  
\begin{equation}
    I\otimes\mathcal{E}_{\rm SIC}(\ketbra{M\!E}{M\!E})
  = \frac{\frac{1}{d}I\otimes I+\ketbra{M\!E}{M\!E}}{d+1}
\end{equation}
which saturates the upper bound. Von Neumann entropy is strictly Schur
concave~\cite{Bosyk:2016}, so the upper bound is saturated iff
$\lambda = \left(\frac{1}{d}, \frac{1}{d(d+1)}, \ldots,
\frac{1}{d(d+1)} \right)$. Equation \eqref{maxbound} shows that
$\ket{M\!E}$ is the maximal eigenstate and that $\{E_j\}$ is a rank-$1$
MIC in the same way as in Theorem \ref{existenceproof}. By the spectral decomposition, we may write
\begin{equation}
    I\otimes\mathcal{E}(\ketbra{M\!E}{M\!E})
  = \frac{1}{d}\ketbra{M\!E}{M\!E}+\frac{1}{d(d+1)}\sum_{i=2}^{d^2}P_i
\end{equation}
where $P_i$ are projectors into the other $d^2-1$ eigenstates. As the
full set of projectors forms a resolution of the identity, we have
\begin{equation}
  \sum_{i=2}^{d^2}P_i=I\otimes I-\ketbra{M\!E}{M\!E}\;,
\end{equation}
so 
\begin{equation}
    I\otimes\mathcal{E}(\ketbra{M\!E}{M\!E})
    = \frac{\frac{1}{d}I\otimes I+\ketbra{M\!E}{M\!E}}{d+1}\;.
\end{equation}
It follows from \eqref{qsicproperty} in \hyperref[quasiSIC]{Appendix A} that
\begin{equation}\label{MEidentity}
  \ketbra{M\!E}{M\!E} = \frac{d+1}{d^2}\sum_{i=1}^{d^2}Q^T_i\otimes
  Q_i-\frac{1}{d}I
  \otimes I\;,
\end{equation}
where the $Q_i$ are elements of a quasi-SIC. From the previous expression we now have
\begin{equation}\label{spectral}
    I\otimes\mathcal{E}(\ketbra{M\!E}{M\!E})=\frac{1}{d^2}\sum_iQ^T_i\otimes Q_i\;.
\end{equation}
Applying $I\otimes\mathcal{E}$ directly to equation \eqref{MEidentity}
gives us
\begin{equation}\label{direct}
  \begin{split}
      I\otimes\mathcal{E}(\ketbra{M\!E}{M\!E})
    &=\frac{d+1}{d^2}\sum_iQ^T_i\otimes\mathcal{E}(Q_i)-\frac{1}{d}I\otimes I\\
    &=\frac{1}{d^2}\sum_iQ^T_i\otimes[(d+1)\mathcal{E}(Q_i)-I]\;,
  \end{split}
\end{equation}
where we used that every LMC is unital and that
$\frac{1}{d}\sum_iQ^T_i=I$. Comparing equations \eqref{spectral} and
\eqref{direct}, we may see that
\begin{equation}
  Q_i=(d+1)\mathcal{E}(Q_i)-I
\end{equation}
by multiplying both sides by $\widetilde{Q}^T_j\otimes I$ and tracing
over the first subsystem. The quasi-SICs form a basis for operator
space, so it follows by linearity that
\begin{equation}
  \mathcal{E}(\rho)=\frac{I+\rho}{d+1}\;,
\end{equation}
and so by Theorem \ref{existenceproof} we are done.
\end{proof}

\section{Conclusions}

In prior works we have emphasized the importance of MICs as a special
class of measurements. The considerations of this paper developed from
the idea that MICs may naturally furnish important classes of quantum
channels as well. We affirmed this intuition with the introduction of LMCs which enabled us to discover several new ways in which SICs occupy
a position of optimality among all MICs, supposing they exist. The
appearance of additional equivalences with SIC existence plays two
important roles. First, it should aid those trying to prove the SIC
existence conjecture in all finite dimensions, and second, to our
minds, it suggests that LMCs are a more important family of quantum
channels than has been realized. We hope this work will inspire more
study of LMCs and other types of channels derived from MICs not investigated
here.

One example of such an alternative is a procedure where, when the
agent implementing the channel applies the MIC, they \emph{reprepare}
the measured system in such a way that they ascribe a fixed quantum
state to it, the choice of new state being made based on the
measurement outcome. The action of such a channel is
\begin{equation}
  \mathcal{E}(\rho) = \sum_i (\tr \rho E_i) \sigma_i,
\end{equation}
where the states $\{\sigma_i\}$ are the new preparations applied in
consequence to the measurement outcomes. Channels defined by a POVM
and a set of repreparations are known as \emph{entanglement-breaking
  channels}~\cite{Ruskai:2003}. When the POVM is a MIC, we can speak
of an \emph{entanglement-breaking MIC channel} (EBMC). EBMCs coincide
with LMCs for rank-1 MICs and repreparations proportional to the MIC, but not in general. While earlier work
already gives some indication that SIC channels are significant among
EBMCs~\cite{DeBrota:2018a}, we suspect that there is much more to be
discovered about EBMCs as a class.
\section*{Acknowledgments}
We gratefully acknowledge helpful
conversations with Gabriela Barreto Lemos, Christopher A.\ Fuchs and
Jacques Pienaar. Also we thank two anonymous referees for identifying points that required clarification and locating an error in our original proof of Lemma \ref{lemma}. This research was supported in part by the John Templeton
Foundation. The opinions expressed in this publication are those of
the authors and do not necessarily reflect the views of the John
Templeton Foundation. JBD was supported by grant FQXi-RFP-1811B of the
Foundational Questions Institute. 

\section*{Postscript}
Due to a breakdown of our university email system, it was not until after \booktitle{Physical Review A} published this article that we became aware of the preprint ``Entanglement Breaking Rank'' by Pandey \textit{et al.}~\cite{Pandey:2018}. Their Corollary 3.3 is equivalent to our Theorem 1, albeit proved from a different starting point. They consider all channels having the same action as $\mathcal{E}_{\rm SIC}$ and ask when those channels can be achieved using only $d^2$ rank-1 Kraus operators. We consider channels defined by $d^2$ Kraus operators (of arbitrary rank) and ask
when they can have the action of the SIC channel. We regret this oversight, and we commend their paper to the reader's attention. The silver lining is that we can now say the SIC problem has attracted sufficient interest that the literature is not easy to keep up with. Moreover, our attention having been called back to this paper after an interlude thinking about other aspects of SICs, we now believe that Conjecture 1 can be proven for the special case of unbiased LMCs in $d = 2$. We now sketch
the argument here.

Consider a MIC whose elements are constructed by taking the orbit of an operator under the action of a discrete group of unitaries. Such a MIC is known as \emph{group covariant} and is necessarily unbiased. If $\{\ket{\psi_j}\}$ is a set covariant with respect to the same group as the MIC, then the post-channel states $\{\mathcal{E}(\ketbra{\psi_j}{\psi_j})\}$ are unitarily equivalent and thus isospectral. We have the eigenvalue bound
\begin{equation}
    \lambda_{\rm max}(\mathcal{E}(\ketbra{\psi_0}{\psi_0})) \geq \bra{\psi_0}\mathcal{E}(\ketbra{\psi_0}{\psi_0}) \ket{\psi_0}\;. 
\end{equation}
In the case that the states $\{\ket{\psi_j}\}$ comprise a SIC and the LMC is unbiased and rank-1, that is, the MIC elements take the form $E_i=\frac{1}{d}\ketbra{\phi_i}{\phi_i}$, we can evaluate this bound by transferring the unitary transformations from the LMC elements to the SIC states:
\begin{equation}
    \lambda_{\rm max}(\mathcal{E}(\ketbra{\psi_0}{\psi_0}))
 \geq \frac{1}{d} \sum_i |\braket{\phi_0}{\psi_i}|^4 = \frac{2}{d+1}\;.
\end{equation}
Supposing the entropic bound in Theorem 2 is saturated, then the MIC must be rank-1, and the average maximum eigenvalue is equal to $2/(d+1)$. If a SIC exists, then the discrete average of the maximum eigenvalue over the SIC-state inputs is equal to the continuous average over all pure states, because a SIC is a 2-design. Therefore, if a SIC exists and the entropic bound is saturated, then the maximum eigenvalue of each post-channel state for any SIC-state input is exactly $2/(d+1)$. In addition, the eigenvector of the post-channel state corresponding to this eigenvalue is the SIC-state input itself.

Eigenvalue information is most helpful in $d = 2$. Knowing the maximum eigenvalue fixes the only other eigenvalue, and from the above, we have the complete eigendecomposition of the post-channel state for any SIC-state input. From here, we can essentially do quantum channel tomography, fixing by linearity the action of the channel.

A careful study of Bloch-sphere geometry shows that an unbiased rank-1 MIC
in $d = 2$ is necessarily group covariant, and in fact is unitarily equivalent to a MIC covariant under the Pauli group. Therefore, knowing that an LMC in $d = 2$ is unbiased and that the entropic bound is saturated, we know the MIC is group covariant, and the above argument applies.

\appendix
\section*{Appendix A}\label{quasiSIC}
Here we define and construct the quasi-SICs which furnish the Kraus operators needed in Theorem \ref{existenceproof} and which were referenced in Theorem \ref{entropyexchangethm}. Although SIC existence is not assured, one may always form a
  \emph{quasi-SIC} in any finite dimension $d$. A quasi-SIC is a set
  of Hermitian operators obeying the same Hilbert--Schmidt inner
  product condition as the SIC projectors. As positivity is not
  demanded, it is relatively easy to construct a quasi-SIC as
  follows~\cite{Appleby:2017}. Start with an orthonormal basis for the
  Lie algebra $\mathfrak{su}(d)$ of traceless Hermitian
  operators. With the Hilbert--Schmidt inner product this space is a
  $(d^2-1)$-dimensional Euclidean space, so it is possible to
  construct a regular simplex $\{B_i\}$ consisting of $d^2$ normalized
  traceless Hermitian operators. In this case $\tr B_iB_j =
  \frac{-1}{d^2-1}$ when $i\neq j$. Then the operators
  \begin{equation}
    Q_i = \sqrt{\frac{d-1}{d}}B_i+\frac{1}{d}I
  \end{equation}
  form a quasi-SIC. 

  It turns out that $A_i=\frac{1}{\sqrt{d}}Q_i$ give a set of Kraus
  operators such that
  \begin{equation}
      \mathcal{E}(\rho)=\sum_jA_j\rho A_j^\dag=\frac{I+\rho}{d+1}\;,
  \end{equation}
  or, more generally, for an arbitrary operator $X$,
  \begin{equation}\label{qsicaction}
      \mathcal{E}(X)=\frac{(\tr X)I+X}{d+1}\;,
  \end{equation}
  that is, equivalent to the action of $\mathcal{E}_{\rm SIC}$.
To see this, first observe from Corollary 1 in \cite{Appleby:2015} that
  \begin{equation}\label{qsicproperty}
      \begin{split}
    \frac{1}{d}\sum_iQ_i\otimes Q_i^T
    &= \frac{2}{d+1}P_{\rm sym}^{T_B}\\
    &= \frac{1}{d+1}\left(I^{\otimes 2}+\sum_{kl}\ketbra{kk}{ll}\right),
\end{split}
  \end{equation}
  where $T_B$ indicates the partial transpose over the second
  subsystem. Then, with the help of the vectorized notation for an operator $\kket{A}:=\sum_iA\otimes I\ket{i}\ket{i}$ and the identity
  $\kket{BAB}=B\otimes B^T\kket{A}$, we have
  \begin{equation}
      \begin{split}
    \kket{\mathcal{E}(X)} &= \frac{1}{d}\sum_iQ_i\otimes
    Q_i^T\kket{X}\\ 
    &=\frac{1}{d+1}\left(I^{\otimes 2}+\sum_{kl}\ketbra{kk}{ll}\right)\kket{X}\\
    &=\frac{1}{d+1}\left(\kket{X}+\sum_{klm}\ketbra{k}{l}X\ket{m}\otimes\ketbra{k}{l}I\ket{m}\right)\\
    &=\frac{1}{d+1}\left(\kket{X}+\sum_{km}\bra{m}X\ket{m}\ket{k}\ket{k}\right)\\
    &=\frac{1}{d+1}\left(\kket{X}+\kket{(\tr X)I}\right)\\
    &= \biggr\lvert\frac{(\tr X)I+X}{d+1}\biggr\rangle\!\!\!\biggr\rangle\;,
\end{split}
  \end{equation}
  which is equivalent to \eqref{qsicaction}. Sending $X=I$ through equation \eqref{qsicaction} reveals the identity 
  \begin{equation}\label{qsicsquaresum}
      \frac{1}{d}\sum_iQ_i^2=I\;,
  \end{equation}
  which, since the quasi-SICs are Hermitian, is equivalent to the requirement that Kraus operators satisfy $\sum_iA_i^\dag A_i=I$.

\section*{Appendix B}\label{dblyquasi}
A doubly quasistochastic matrix is a matrix of real numbers whose rows and columns sum to $1$. If we assume that $\{E_j\}$
is an unbiased MIC, $E_i=\frac{1}{d}\rho_i$, we will now show that
$U$ is furthermore doubly quasistochastic.

The Gram matrix for the $K_i$ operators \eqref{quasikraus} is
\begin{equation}
  \tr K_iK_j=(1/d-\gamma)\delta_{ij}+\gamma 
\end{equation}
where 
\begin{equation}
  \gamma=\frac{d-1-(d-2)\beta+2\sqrt{(1-\beta)(1-\beta+d\beta)}}{d(d+1)}\;.
\end{equation}
Since $e_i=1/d=\tr E_i=\tr\sqrt{E_i}\sqrt{E_i}$, we have
\begin{equation}
  \begin{split}
    \frac{1}{d}&=\sum_{jk}[U]_{ij}[U]_{ik}\tr K_jK_k\\
    &=\sum_{jk}[U]_{ij}[U]_{ik}\big((1/d-\gamma)\delta_{jk}+\gamma\big)\\
    &=(1/d-\gamma)\sum_{jk}[U]_{ij}[U]_{ik}\delta_{jk}
    + \gamma\left(\sum_j[U]_{ij}\right)^{\!2}\\
    &=(1/d-\gamma)+\gamma\left(\sum_j[U]_{ij}\right)^{\!2}\;,
  \end{split}
\end{equation}
from which we obtain
\begin{equation}\label{usum2}
    \sum_j[U]_{ij}=1\;.
\end{equation}
Now note that 
\begin{equation}
    \tr K_i=\frac{\sqrt{1-\beta+d\beta}+(d-1)\sqrt{1-\beta}}{d}\;.
\end{equation}
Tracing both sides of \eqref{remix2} reveals that $\tr\sqrt{E_i}=\tr
K_i$ is a constant. Corollary 3 from \cite{Appleby:2015} then asserts that
\begin{equation}
    \begin{split}
  \sum_i\sqrt{E_i} &= \sum_j\left(\sum_i[U]_{ij}\right)K_j\\
  &= \sqrt{d(1/d-\gamma)+d^3\gamma}I\;.
  \end{split}
\end{equation}
Summing equation \eqref{quasikraus} gives
\begin{equation}
  \sum_iK_i=\left(\sqrt{1-\beta+d\beta}+(d-1)\sqrt{1-\beta}\right)I\;.
\end{equation}
As the $K_j$ form a linearly independent set, combining the previous
two equations requires
\begin{equation}
  \sum_i[U]_{ij}=\frac{\sqrt{1-d\gamma+d^3\gamma}}{\sqrt{1-\beta+d\beta}
    + (d-1)\sqrt{1-\beta}}
  = 1\;.
\end{equation}
Thus $U$ is doubly quasistochastic, as claimed.
\end{document}